\author{Kunal Talwar\\
Apple.\\
\texttt{ktalwar@apple.com}}
\newcommand{\Ell}{\ensuremath{\mathcal{L}}}
\newcommand{\edapprox}{\ensuremath{\approx_{(\eps,\delta)}}}
\newcommand{\edapproxp}[2]{\ensuremath{\approx_{(#1,#2)}}}
\newcommand{\vg}{\mathbf{g}}
\newcommand{\normal}{\mathcal{N}}
\newcommand{\vW}{\mathbf{W}}
\newcommand{\vs}{\mathbf{s}}
\DeclarePairedDelimiterX{\infdivx}[2]{(}{)}{#1\;\delimsize\|\;#2}
\crefname{lem}{Lemma}{Lemmas}
\crefname{algocf}{alg.}{algs.}
\Crefname{algocf}{Algorithm}{Algorithms}
\newif\iffull
\begin{document}
\title{Differential Secrecy for Distributed Data and Applications to Robust Differentially Secure Vector Summation}
\date{}
\maketitle
\begin{abstract}

  Computing the noisy sum of real-valued vectors is an important primitive in differentially private learning and statistics. In private federated learning applications, these vectors are held by client devices, leading to a distributed summation problem. Standard Secure Multiparty Computation protocols for this problem are susceptible to poisoning attacks, where a client may have a large influence on the sum, without being detected.

  In this work, we propose a poisoning-robust private summation protocol in the multiple-server setting, recently studied in PRIO~\citep{Prio}. We present a protocol for vector summation that verifies that the Euclidean norm of each contribution is approximately bounded. We show that by relaxing the security constraint in SMC to a differential privacy like guarantee, one can improve over PRIO in terms of communication requirements as well as the client-side computation. Unlike SMC algorithms that inevitably cast integers to elements of a large finite field, our algorithms work over integers/reals, which may allow for additional efficiencies.
\end{abstract}
\section{Introduction}

We investigate the problem of distributed private summation of a set of real vectors, each of norm at most 1. Each client device holds one of these vectors and the goal is to allow a server to compute the sum of these vectors. Privacy constraints require that an adversary not learn too much about any of these vectors, and this constraint will be expressed as a differential privacy~\citep{DR14-book} requirement.

This is a common primitive to private federated learning and statistics. In a setting of a trusted server, the clients could send the vectors to the server, which could then output the sum with appropriate noise added to ensure differential privacy. A natural solution then is to use tools from secure multiparty computation to simulate this trusted server. This approach goes back to the early days of differential privacy~\citep{ODOpaper}, and has been heavily investigated~\citep{ChanSS12,Bonawitz17}. Practical protocols applying this approach have to deal with clients dropping out during the protocol, and often scale poorly with the number of clients. The security guarantee of SMC ensures that we learn nothing except the (noisy) sum. However, a malicious client in many of these protocols can contribute a vector with arbitrarily large norm and go completely undetected. Addressing this manipulability would require additional modification to these protocols, making them less feasible.

An elegant way out is possible under slightly stronger trust assumptions. \citet{Prio} show that if we have a set of $S$ servers where at least one of them is trusted, we can efficiently get both privacy and integrity\footnote{We defer the precise definitions to~\cref{sec:definitions}}. In our application, this framework gives a protocol that validates that each vector has norm at most 1, and computes the sum of vectors. The security guarantee here says that other than the output and the fact that the inputs have norm at most 1, any strict subset of servers learns nothing about the clients' inputs. If the clients add a small amount of noise to their inputs, or more generally, use a local randomizer, the final output can be shown to be differentially private. As long as all the inputs are bounded in norm, the validity predicates are all 1 and hence have no information. The overall security guarantee then says that the view of any strict subset of servers is differentially private with respect to the input vectors.

Note that perfect secrecy here is impossible as the output itself leaks information about the inputs. In the approach described above, {\em What we compute} does not leak too much about the input since we are computing a differentially private output. {\em How we compute it}, i.e. the computation protocol itself provides perfect secrecy {\em subject to} the output.

Our guarantee of interest is the leakage about any input from the process as well as the output, i.e. the sum of the privacy costs from the {\em what} and the {\em how}. In this work, we relax the secrecy guarantee of the protocol to a differential secrecy guarantee. We show that this allows for simpler and more efficient algorithms for the robust vector summation problem.

As a warm-up, we first show a natural variant of secret sharing that satisfies differential secrecy.
We next show that one can privately verify the norm of a secret-shared vector, if one allows some slack. We present a simple protocol based on random projections. Our protocol accepts all vectors of norm at most 1 with high probability. Additionally, a vector with too large a norm (polylogarithmic in the parameters) will be rejected with high probability. Thus we have some robustness: a malicious client can affect the sum by more than norm 1, but not arbitrarily more. Our privacy proof here relies on a new result on the privacy bounds for noisy random projections. Unusually for a differential privacy result, here we exploit the randomness of the ``query''. Compared to PRIO, our verification algorithm requires no additional work from clients, and requires less communication between servers.

With secret-sharing and norm-verification over secret shares in place, our algorithm for summation is simple. The clients secret-share their vectors, and the servers run the norm-verification protocol on all the clients. For the clients that pass the norm verification, each server adds up their secret shares. The servers now hold additive secret shares of the summation, which can be communicated between servers to derive the vector summation.

This then eliminates the need for the client to perform any additional computation ($\Theta(d)$ in PRIO) or communication ($\Theta(\sqrt{d})$~\cite{BonehBCGI19} in PRIO). The validity check comes at zero cost to the client. This comes at a small increase in the inter-server communication from $3$ field elements to a logarithmic number of real numbers.

Our algorithms can work over real numbers or integers, instead of finite fields. Compressing these to reduce communication, for example by truncating or rounding does not affect the privacy guarantee, allowing one to find a representation that provides an acceptable tradeoff between accuracy and communication cost.

In practice, as we discuss in~\cref{sec:robust_secagg}, this can be a significant saving, especially in settings such as federated learning where the vectors being aggregated are high-dimensional gradients and the client to server communication is often the bottleneck. For typical parameters, where PRIO would need a large finite field needing 128 bits per coordinate (or at the very least 32 bits per coordinate), using real numbers can bring us down to 8 or 16 bits per coordinate.

Several natural questions remain. Our norm verification, and hence our robustness guarantee for summation, is approximate. We reject vectors with large enough norm. It would be interesting to reduce, or even eliminate this approximation, while maintaining the efficiency advantages of our protocol. Given the practical relevance of robust summation, it would also be compelling to improve distributed proofs of norm bound in the standard PRIO setting.

Finally, relaxing perfect secrecy in secure multiparty computation, or more broadly in cryptography to differential secrecy may allow for more efficient protocols in other settings.

\section{Related Work}

The question of simultaneously studying the differentially private function (the {\em What}) and the cryptographic protocol for computing it (the {\em How}) was first studied by~\citet{BeimelNO08}. They showed that in the SFE setting without a trusted server, one can provably gain in efficiency of the protocol for summing $0$-$1$ values. This differential privacy-based definition of security was subsequently used by \citet{BackesKMR15}, who show that this relaxation allows one to use imperfect randomness in certain cryptographic protocols.

Private anonymous summation protocols using mutliple servers go back to at least the split-and-mix protocol of~\citet{IshaiKRS06}. In the context of differential privacy, these have gained a lot of importance given recent results in the shuffle model of privacy~\citep{Bittau17, ErlingssonFMRTT19,CheuSUZZ19,BalleBGN19a}. Recent works by~\citet{Balle2020, GMPV20, GhaziKMPS21} have improved the efficiency of these results. These protocols however suffer from the manipulability issue: it is easy for one malicious client to significantly poison the sum without getting detected.

Another line of work~\citep{Bonawitz17} proposes practical secure summation protocol under different trust assumptions. These protocols also suffer from the manipulability problem. Recent works such as~\citep{SoGA20,BellBGLR20} address the scaling challenges in that work.

The two-party version of some of these questions have been studied by~\citep{McGregor, GoyalMPS}.~\citet{KairouzOVa, KairouzOVb} study private secure multiparty computation under a local differential privacy constraint. In a different vein,~\citet{CheuSU19} show that locally differentially private algorithms are fairly manipulable by small subsets of users, and quantify their manipulability.

\section{Definitions}
\label{sec:definitions}
We would like the protocol to satisfy several properties. We define appropriate notions of these first.

\begin{definition}[Completeness]
  A protocol $\Pi$ is $(1-\beta)$-complete w.r.t. $\Ell$ if for all $x \in \Ell$, the protocol accepts $x$ with probability at least $(1-\beta)$.
\end{definition}

\begin{definition}[Soundness]
  A protocol $\Pi$ is $\beta$-sound w.r.t. $\Ell$ if for $x \not\in \Ell$, the protocol accepts with probability at most $\beta$.
\end{definition}

Let $\Ell_{r}$ denote the set of vectors with norm at most $r$. We will show completeness w.r.t. $\Ell_1$ and soundness w.r.t. $\Ell_\rho$. for a parameter $\rho >1$.

Additionally, we would like a mild relaxation of Zero Knowledge, inspired and motivated by the notion of Differential Privacy. We first recall a notion of near-indistinguishability used in Differential Privacy:

\begin{definition} Two random variables $P$ and $Q$ are said to be $(\eps,\delta)$-close, denoted by $P \approx_{(\eps, \delta)} Q$ if for all events $S$, $Pr[P \in S] \leq \exp(\eps) \cdot \Pr[Q \in S] + \delta$, and similarly, $Pr[Q \in S] \leq \exp(\eps) \cdot \Pr[P \in S] + \delta$
\end{definition}

One can relax the secrecy requirements in cryptography to differential secrecy.  Here we define this notion for Zero Knowledge\footnote{This is the {\em local DP} version of ZK which is appropriate in this setting. One can similarly define a central DP version, where the simulator has access to all but one client's input}.
\begin{definition}
  We say a protocol $\Pi$ is $(\eps,\delta)$-Differentially Zero Knowledge w.r.t. $\Ell$ if there is a distribution $Q$ such that for all $x \in \Ell$, the distribution $\Pi(x)$ of the protocol's transcript on input $x$ satisfies $\Pi(x) \edapprox Q$.
\end{definition}

Note that here we require privacy, or differential zero knowledge for $x \in \Ell$. While one can naturally define a computational version of this definition, along the lines of computational differential privacy definitions~\cite{MironovPRV09}, we restrict ourselves to the information-theoretic version in this work.

In this work, we will be using multi-verifier protocols. Here the notion of near Zero Knowledge is with respect to a strict subset of verifiers.
\begin{definition}
A single-prover, multiple-verifier protocol $\Pi$ is $(\eps, \delta)$-Differentially Zero Knowledge w.r.t to a subset $T$ of parties if there is a distribution $Q$ dependent only on inputs of $T$ and the output of the protocol, such that for any set of inputs for $T^c$ that are valid for some $x\in \Ell$, the distribution of messages from $T^c$ to $T$ is $(\eps, \delta)$-close to $Q$.
\end{definition}

\medskip\noindent{\bf Attack Models}: In our work, the client will play the role of the prover, and the servers will play the role of the verifiers. We interchangeably use client/server and prover/verifier terminology as appropriate. We will prove completeness and privacy for honest-but-curious prover. We will establish soundness against an arbitrary malicious provers. This implies that a client that is behaving according to the protocol will get a strong privacy guarantee, and will be accepted with high probability. A malicious client will still likely be caught, and may not get a privacy assurance. Our protocols will have privacy against an a strict subset of servers being malicious, as long as at least one of the servers is honest. The soundness and completeness results will assume that all servers are honest. Thus some subsets of servers behaving maliciously can hurt the utility of the protocol, but not the privacy.

We remark that there is a definitional choice here: when defining a zero-knowledge protocol with soundness/completeness strictly smaller than 1, the simulator may be given access to $\one(x \in \mathcal{L})$, or to the output of the protocol. This leads to slightly different definitions of zero knowledge.
Since we want privacy only for $x \in \mathcal{L}$, the first version would essentially mean that $T$ can differentially simulate the full interaction. The second definition allows leakage of the output of the protocol. While we are typically interested in the former for the whole protocol, in this work we choose the second option. This modular approach allows us to separately analyze the privacy cost of the output of the protocol. In particular, we may apply different analyses depending on whether we consider distributed noise addition, or apply local randomizers and rely on privacy amplification by shuffling. We defer additional discussion to~\cref{sec:robust_secagg}.

\subsection{Secure Summation}
The secure summation problem is defined as follows. There is a set of $N$ clients with client $i$ holding a vector $\vx_i \in \Re^d$ with $\|\vx_i\| \leq 1$. Our goal is to design a protocol with $S$ servers such that for suitable parameters $\eps, \delta, \rho, \beta$, the following properties hold:
\begin{description}
  \item [Correctness:] When all parties are honest, the protocol allows a designated server to compute a vector $\vy \in \Re^d$ such that  $\vy = \sum_i \vx_i$ with probability at least $(1-\beta)$.
  \item [Privacy:] For any honest client $i$, the protocol is $(\eps, \delta)$-Differentially Zero Knowledge w.r.t. any subset of parties that excludes at least one server.
  \item [Robustness:] For any possibly malicious client $i$, the computed summation $\vy$ differs from the output $\vy_{-i}$ without client $i$ in norm by at most $\rho$, i.e. $\|\vy-\vy_{-i}\|_2 \leq \rho$, except with probability at most $\beta$.
\end{description}

In words, we would like a protocol that is private w.r.t. to any honest client as long as at least one of the $S$ servers is honest. Thus an honest client that trusts at least one of the servers to be honest is assured of a differential privacy guarantee. The robustness property gives an integrity guarantee if all servers are honest. The parameter $\rho \geq 1$ controls how much any client can impact the output of the protocol. Note that a malicious client can always behave as if their input was $\vx'_i$ for any arbitrary vector of norm $1$. The robustness requirement here puts an upper bound on how much a malicious client can distort the summation The correctness and robustness properties will allow failure with probability $\beta$. Depending on the application, a small constant $\beta$ may be acceptable.

\section{Preliminaries}
We state two important properties of the differential privacy notion of closeness.
\begin{proposition}
  Suppose that $P \edapprox Q$ and $P' \edapproxp{\eps'}{\delta'} Q'$. Then
  \begin{enumerate}
    \item[Post Processing:] For any function $f$, $f(P) \edapprox f(Q)$.
    \item[Simple Composition: ] $(P, P') \edapproxp{\eps+\eps'}{\delta+\delta'} (Q, Q')$.
  \end{enumerate}
\end{proposition}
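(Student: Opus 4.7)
For post-processing, I would argue directly from the definition. The key observation is that for any event $S$ in the codomain of $f$, the event $\{f(P) \in S\}$ equals $\{P \in f^{-1}(S)\}$, so
$$\Pr[f(P) \in S] = \Pr[P \in f^{-1}(S)] \leq \exp(\eps)\Pr[Q \in f^{-1}(S)] + \delta = \exp(\eps)\Pr[f(Q)\in S] + \delta,$$
with the symmetric inequality following by applying the other half of the hypothesis. If $f$ uses internal randomness that is independent of $P$ and $Q$, I would condition on that randomness, apply the deterministic case for each fixed realization, and take expectation over the randomness; since the bound is linear in the probabilities, the bound survives the averaging.

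For simple composition, the natural setting is that $P\perp P'$ and $Q\perp Q'$ (otherwise the separate marginal hypotheses do not constrain the joint enough to give $\eps+\eps'$), and I would state this assumption up front. The workhorse is a two-step swap. Fixing an event $S$ in the product space and writing $S_y := \{x : (x,y)\in S\}$, the independence of $P$ and $P'$ gives
$$\Pr[(P,P')\in S] = \int \Pr[P\in S_y]\, dP'(y) \leq \int \bigl(\exp(\eps)\Pr[Q\in S_y] + \delta\bigr)\, dP'(y) = \exp(\eps)\Pr[(Q,P')\in S] + \delta.$$
An analogous swap of $P'$ for $Q'$, this time integrating against $Q$ and using $P'\approx_{(\eps',\delta')} Q'$, turns the middle term into $\exp(\eps')\Pr[(Q,Q')\in S] + \delta'$. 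Chaining gives $\Pr[(P,P')\in S] \leq \exp(\eps+\eps')\Pr[(Q,Q')\in S] + \delta + \exp(\eps)\delta'$, and the reverse direction is symmetric.

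The one subtlety, and the main obstacle, is that the direct swap produces a slack of $\delta + \exp(\eps)\delta'$ rather than the claimed $\delta+\delta'$. To sharpen it, I would invoke the standard ``bad-event'' characterization of $(\eps,\delta)$-closeness: $P\approx_{(\eps,\delta)} Q$ iff there exist events $E_P,E_Q$ with $\Pr[P\in E_P],\Pr[Q\in E_Q]\leq \delta$ such that, after restriction to the complements, $P$ and $Q$ satisfy the pure $(\eps,0)$ bound. Under this view the product distributions have bad event of total mass at most $\delta+\delta'$ by a union bound, and outside the bad event a pointwise density-ratio argument multiplies the two $(\eps,0)$ bounds to give the $\exp(\eps+\eps')$ factor. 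Verifying the equivalent characterization and showing that the product-of-good-parts argument is clean is the only nontrivial piece; everything else is bookkeeping.
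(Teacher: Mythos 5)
The paper states this proposition without proof, treating it as standard. Your post-processing argument is correct, and your handling of randomized $f$ by conditioning on its internal coins is the right way to make it fully general.

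For composition, you have identified the two points that are usually glossed over, and both are genuine. First, independence (or at least a sequential/adaptive structure) between the two pairs must be assumed for the conclusion to hold, and the proposition's statement leaves this implicit; flagging it up front is exactly right. Second, the direct swap really does only give $\delta + e^{\eps}\delta'$, and this is not an artifact of the order in which you swap: the other order gives $\delta' + e^{\eps'}\delta$, and since $e^{\eps}, e^{\eps'}\ge 1$ neither is as tight as $\delta+\delta'$. So a finer argument is indeed needed, and the mixture/bad-event route you point to is the standard fix. The one place that needs tightening is your statement of the characterization itself: as written (two events $E_P, E_Q$ each of measure at most $\delta$, with $(\eps,0)$-closeness after restriction to complements) it is not quite an equivalence in the form needed. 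The clean version asserts a decomposition $P = (1-\delta)P_0 + \delta P_1$, $Q = (1-\delta)Q_0 + \delta Q_1$ with $P_0 \edapproxp{\eps}{0} Q_0$; with equal mixture weights, the product argument goes through cleanly because $(1-\delta)(1-\delta') \geq 1-(\delta+\delta')$ and $P_0 \times P_0'$ is $(\eps+\eps',0)$-close to $Q_0\times Q_0'$. Proving that $(\eps,\delta)$-closeness implies such a decomposition exists is in fact the only substantive lemma here (it is the harder half of the hockey-stick characterization, see e.g. Kairouz--Oh--Viswanath or the appendix of Dwork--Roth), and it deserves to be stated and proved rather than deferred; with it in hand, everything else in your outline is bookkeeping, as you say.
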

\noindent The following is a restatement of the privacy of the Gaussian mechanism~\citep[Thm A.1]{DR14-book}.
\begin{lemma}
\label{lem:gaussian}
Let $\eps, \delta >0$ and let $\vx \in \Re^d$ satisfy $\|\vx\|_2 \leq 1$. Let $P \sim \normal(\mathbf{0}, \sigma^2 \mathbb{I}_d)$ and let $Q \sim \vx + \normal(\mathbf{0}, \sigma^2 \mathbb{I}_d)$.
Then $P \edapprox Q$ if $\sigma \geq 2\sqrt{\ln \frac 2 \delta}/{\eps}$.
\end{lemma}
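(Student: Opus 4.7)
The plan is to reduce to one dimension via rotation invariance and then bound the privacy loss directly using a Gaussian tail bound. Let $\mu = \|\vx\|_2 \leq 1$. Since $\normal(\mathbf{0}, \sigma^2 \mathbb{I}_d)$ is invariant under orthogonal transformations, I can apply an orthogonal map that sends $\vx$ to $\mu \ve_1$. Post-processing by the inverse map preserves the $\edapprox$ relation, so it suffices to prove the statement in one dimension: if $P_0 \sim \normal(0, \sigma^2)$ and $Q_0 \sim \normal(\mu, \sigma^2)$ with $|\mu|\leq 1$, then $P_0 \edapprox Q_0$. Projection onto the first coordinate keeps the remaining $d-1$ marginals identical, which handles the higher-dimensional piece.

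Next I would analyze the privacy loss random variable. A short computation with the Gaussian density yields
\[
L(z) \;:=\; \ln \frac{p_0(z)}{q_0(z)} \;=\; \frac{\mu^2 - 2\mu z}{2\sigma^2}.
\]
The event $\{L(z) > \eps\}$ is equivalent (for $\mu>0$; the case $\mu<0$ is symmetric) to $z < \frac{\mu}{2} - \frac{\sigma^2\eps}{\mu}$. For $z \sim P_0$, standardizing gives probability $\Pr[Z < \frac{\mu}{2\sigma} - \frac{\sigma\eps}{\mu}]$ where $Z$ is standard normal.

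The key quantitative step is to argue that the hypothesis $\sigma \geq 2\sqrt{\ln(2/\delta)}/\eps$ makes this tail at most $\delta$. Using $\mu \leq 1$, the dominant term $\sigma\eps/\mu$ is at least $\sigma\eps \geq 2\sqrt{\ln(2/\delta)}$, while the correction $\mu/(2\sigma)$ is negligible. Applying the standard tail bound $\Pr[Z<-t] \leq \tfrac{1}{2}\exp(-t^2/2)$ with $t$ slightly below $2\sqrt{\ln(2/\delta)}$ yields a bound at most $\delta$ (this is where the constant $2$ in the lemma hypothesis is consumed; it is the only nontrivial calculation).

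Finally I would assemble the standard ``bad event'' argument: letting $B = \{z : L(z) > \eps\}$, for any measurable set $S$,
\[
\Pr[P_0 \in S] \leq \Pr[P_0 \in B] + \Pr[P_0 \in S \setminus B] \leq \delta + e^\eps \Pr[Q_0 \in S],
\]
since on $S \setminus B$ the ratio of densities is bounded by $e^\eps$. The reverse inequality follows by repeating the argument with the roles of $P_0,Q_0$ swapped, i.e., replacing $\mu$ by $-\mu$, which leaves the hypothesis unchanged. The main obstacle, if any, is merely bookkeeping the constant in the Gaussian tail to ensure the factor $2$ (rather than $\sqrt{2}$) in the stated $\sigma$ condition suffices; everything else is routine and mirrors the proof in~\cite{DR14-book}.
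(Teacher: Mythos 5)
The paper does not prove this lemma; it presents it as a restatement of the privacy of the Gaussian mechanism, citing \citep[Thm A.1]{DR14-book}. Your sketch reconstructs exactly that textbook argument — reduce to one dimension by rotation invariance, compute the privacy-loss random variable, bound its tail with a standard Gaussian estimate, and close with the usual bad-event decomposition — so you are taking the same route as the cited source.

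The one step you wave off as ``bookkeeping'' deserves a caveat. With $\mu=\|\vx\|_2$, the bad set for $P_0$ is $\{z < \mu/2 - \sigma^2\eps/\mu\}$, and this threshold is negative (so that the tail is small) only when $2\sigma^2\eps > \mu^2$. For the worst case $\mu=1$, under the hypothesis $\sigma \geq 2\sqrt{\ln(2/\delta)}/\eps$ this reduces to $\eps < 8\ln(2/\delta)$. That is a real, if mild, side condition; the cited Dwork--Roth theorem is stated under $\eps < 1$, and the lemma here silently inherits some such restriction. Inside that regime your ``negligible correction'' is justified: with $\eps \leq 1$ and $\delta \leq 1/2$ you get $\sigma \geq 2$, hence $\mu/(2\sigma) \leq 1/4$, so $|t| \geq 2\sqrt{\ln(2/\delta)} - 1/4$, which gives $|t|^2/2 \geq \ln(2/\delta)$ with slack, and then $\tfrac{1}{2}e^{-|t|^2/2} \leq \delta/4 \leq \delta$. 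So your plan is correct and closes the proof once the implicit $\eps = O(1)$ assumption is stated; without that, the step you flagged is not merely bookkeeping but actually fails for very large $\eps$.
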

\noindent We next prove the following simple result on the privacy properties of noisy random projections.
\begin{lemma}
  \label{lem:proj_gaussian}
Let $G$ be a random matrix in $\Re^{k\times d}$ such that for a constant $c_{\delta}$, every $\vx \in \Re^d, \|\vx\|\leq 1$ satisfies
  \begin{align}
    \Pr[\|G\vx\| \geq c_\delta] \leq \delta,\label{eq:subg_tails}
  \end{align}
  where the probability is taken over the distribution of $G$.
  Let $\sigma = 2 c_\delta \sqrt{\ln \frac 2 \delta}/{\eps}$. Then for any $\vx \in \Re^d$ with $\vx \leq 1$,
  \begin{align*}
    (G, \normal(\mathbf{0}, \sigma^2 \mathbb{I}_d)) \edapproxp{\eps}{2\delta} (G, G\vx + \normal(\mathbf{0}, \sigma^2 \mathbb{I}_d)).
  \end{align*}
\end{lemma}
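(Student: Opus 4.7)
The plan is to condition on $G$ and then reduce to the standard Gaussian mechanism. Since the noise $\normal(\mathbf{0},\sigma^2\mathbb{I}_k)$ is independent of $G$, for any measurable event $S$ in the joint space we can write, letting $S_G = \{y : (G,y) \in S\}$,
\[
\Pr[(G,\vg) \in S] \;=\; \mathbb{E}_G\bigl[\Pr_{\vg}[y \in S_G]\bigr],
\]
where $\vg$ denotes either $\normal(\mathbf{0},\sigma^2\mathbb{I}_k)$ or $G\vx + \normal(\mathbf{0},\sigma^2\mathbb{I}_k)$. So the task reduces to comparing the two inner conditional distributions over $\vg$, for a typical realization of $G$.

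Next, define the good event $A = \{G : \|G\vx\| \leq c_\delta\}$. By the tail hypothesis \eqref{eq:subg_tails}, $\Pr[A^c] \leq \delta$. Conditioned on $G \in A$, the vector $G\vx$ has Euclidean norm at most $c_\delta$, so after dividing by $c_\delta$ it is a unit-norm ``query'' and the Gaussian mechanism applies. The chosen $\sigma$ is precisely $c_\delta$ times the threshold in \cref{lem:gaussian}, so by that lemma (and post-processing by multiplication with $c_\delta$), the conditional laws of $\vg$ satisfy
\[
\normal(\mathbf{0},\sigma^2\mathbb{I}_k) \;\edapprox\; G\vx + \normal(\mathbf{0},\sigma^2\mathbb{I}_k)
\]
whenever $G \in A$.

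Now we integrate. Fix any event $S$ and split the outer expectation into $G \in A$ and $G \in A^c$. On $A$, we bound $\Pr_{\vg \sim G\vx + \normal}[y\in S_G]$ by $\exp(\eps)\,\Pr_{\vg \sim \normal}[y \in S_G] + \delta$. On $A^c$, we use the trivial bound $\Pr[\cdot]\leq 1$, contributing at most $\Pr[A^c] \leq \delta$ to the total. Summing, we obtain
\[
\Pr[(G,\, G\vx + \normal) \in S] \;\leq\; \exp(\eps)\,\Pr[(G,\,\normal) \in S] + \delta + \delta,
\]
and the reverse inequality follows by the same argument (swapping the roles of the two distributions in the Gaussian mechanism bound). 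This yields the claimed $(\eps,2\delta)$-closeness.

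The argument is essentially a union bound paired with conditional application of the Gaussian mechanism, so there is no serious obstacle; the only point requiring mild care is correctly accounting for the joint law of $(G,\vg)$ on the bad $G$'s rather than erroneously trying to invoke the Gaussian bound for those $G$. The scaling of $\sigma$ by $c_\delta$ is the only nontrivial design choice, and it is forced by the worst-case sensitivity of $G\vx$ on the good event $A$.
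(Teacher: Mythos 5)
Your proof is correct and takes essentially the same route as the paper: define the good event $\{\|G\vx\| \leq c_\delta\}$, apply the Gaussian mechanism conditionally on that event with the rescaled $\sigma$, and absorb the bad event's probability $\leq \delta$ into the additive term, yielding $(\eps, 2\delta)$-closeness. The only difference is cosmetic — you spell out the integration over the joint law explicitly, where the paper's proof compresses this into "the claim follows from the definition of $(\eps,\delta)$-closeness."
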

\begin{proof}
  Fix $\vx$ and let $\mathcal{E}$ be the event that $\|G\vx\| \geq c_\delta$. By~\cref{lem:gaussian}, we have that conditioned on the event $\mathcal{E}$,
  \begin{align*}
    (G, \normal(\mathbf{0}, \sigma^2 \mathbb{I}_d)) \edapproxp{\eps}{\delta} (G, G\vx + \normal(\mathbf{0}, \sigma^2 \mathbb{I}_d)).
  \end{align*}
  By~\cref{eq:subg_tails}, $\Pr[\mathcal{E}]\geq 1-\delta$. The claim now follow from the definition of $(\eps,\delta)$-closeness.
\end{proof}
\noindent We next recall a version of the Johnson-Lindenstrauss lemma on the length of random projections.
\begin{lemma}[Gaussian Ensemble JL]
  \label{lem:gaussian_ensemble}
  Let $G \in \Re^{k \times d}$ be a random matrix where each $G_{ij} \sim \normal(0, \frac{1}{k})$. Then for any $\vx \in \Re^d$ with $\|\vx\|\leq 1$,
    \begin{align*}
      \Pr[\|G\vx\| \not\in (1\pm O(\sqrt{{(\ln \tfrac 1 \delta)} /{k}}))\|\vx\| ] \leq \delta
    \end{align*}
\end{lemma}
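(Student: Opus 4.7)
My plan is to reduce $\|G\vx\|^2$ to a chi-squared random variable and apply a standard tail bound, then take square roots.

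First, I fix $\vx \in \Re^d$ with $\|\vx\| \leq 1$ and let $\vg_i \in \Re^d$ denote the $i$-th row of $G$. Since the entries of $\vg_i$ are independent $\normal(0, 1/k)$, the inner product $\vg_i \cdot \vx$ is a Gaussian with mean zero and variance $\|\vx\|^2/k$. Rescaling, $Z_i := \sqrt{k}\,(\vg_i\cdot \vx)/\|\vx\|$ is a standard normal, and the $Z_i$ are independent across rows because the rows of $G$ are independent. Hence
\begin{align*}
  Y \;:=\; \frac{k\,\|G\vx\|^2}{\|\vx\|^2} \;=\; \sum_{i=1}^k Z_i^2 \;\sim\; \chi^2_k.
\end{align*}

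Next I invoke the standard chi-squared concentration bound (Laurent--Massart): for every $t\geq 0$,
\begin{align*}
  \Pr\!\left[\,|Y - k| \geq 2\sqrt{kt} + 2t\,\right] \;\leq\; 2 e^{-t}.
\end{align*}
Setting $t = \ln(2/\delta)$ and dividing by $k$ gives $|Y/k - 1| \leq O(\sqrt{\ln(1/\delta)/k})$ with probability at least $1-\delta$, provided $k \gtrsim \ln(1/\delta)$; i.e., $\|G\vx\|^2/\|\vx\|^2 \in 1 \pm \eta$ with $\eta = O(\sqrt{\ln(1/\delta)/k})$.

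Finally I pass from the squared-norm bound to the norm bound. On the good event we have $\|G\vx\|/\|\vx\| = \sqrt{1 \pm \eta}$, and since $\sqrt{1+u} \in 1 \pm |u|$ for $|u|\leq 1$, this yields $\|G\vx\|/\|\vx\| \in 1 \pm O(\sqrt{\ln(1/\delta)/k})$, which is the claim. The regime $k \lesssim \ln(1/\delta)$ is uninteresting because the bound is trivially true (the right-hand side already exceeds $1$ up to constants). There is no real obstacle; the only mild care needed is the square-root conversion, handled by the elementary inequality above.
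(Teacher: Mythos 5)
Your proof is correct and is exactly the argument the paper has in mind: the paper states the lemma without proof and immediately recalls that $k\|G\vx\|^2/\|\vx\|^2$ is $\chi^2_k$ and invokes the same Laurent--Massart tail bounds (the paper's Theorem~\ref{thm:chi2_tails}). Your chi-squared reduction, choice of $t=\ln(2/\delta)$, and the square-root conversion via $\sqrt{1+u}\in 1\pm|u|$ are all sound, with the caveat about the regime $k\gtrsim\ln(1/\delta)$ handled appropriately.
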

To get more precise estimates, we recall that the sum of squares of $k$ $\normal(0, \frac 1 k)$ random variables is distributed as a (scaled version of a) chi-square distribution $\chi_k^2$. We will use the following tail bounds for $\chi_k^2$ random variables from \citet[Lemma 1 rephrased]{LaurentM2000}:
\begin{theorem}
  \label{thm:chi2_tails}
  Let $Q$ be a $\chi_k^2$ random variable. Then for any $\beta >0$,
  \begin{align*}
    \Pr[\frac{1}{k}Q \leq 1 - 2\sqrt{x/k}] &\leq \exp(-x),\\
    \Pr[\frac{1}{k}Q \geq 1 + 2\sqrt{x/k} + 2x/k] &\leq \exp(-x).
  \end{align*}
  \end{theorem}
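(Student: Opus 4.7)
The plan is to use the standard Cram\'er--Chernoff method via the moment generating function. Writing $Q = \sum_{i=1}^k Z_i^2$ with $Z_i \sim \normal(0,1)$ i.i.d., the MGF of each $Z_i^2$ is $(1-2\lambda)^{-1/2}$ for $\lambda < 1/2$, so $E[\exp(\lambda(Q-k))] = (\exp(-\lambda)(1-2\lambda)^{-1/2})^k$, valid for $\lambda < 1/2$. An analogous expression, $E[\exp(-\lambda(Q-k))] = (\exp(\lambda)(1+2\lambda)^{-1/2})^k$, holds for any $\lambda > 0$, and will drive the lower-tail argument.

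The first analytic step is to upper-bound these log-MGFs by clean closed-form expressions. For the upper tail, the series expansion $-\log(1-u) - u = \sum_{n \geq 2} u^n/n$ compared termwise against $(u^2/2)/(1-u) = \sum_{n \geq 2} u^n/2$ yields
\[ \log E[\exp(\lambda(Q-k))] \leq \frac{k\lambda^2}{1-2\lambda}, \qquad \lambda \in (0, 1/2). \]
For the lower tail, the elementary inequality $\log(1+2\lambda) \geq 2\lambda - 2\lambda^2$ (valid for $\lambda > 0$) similarly gives $\log E[\exp(-\lambda(Q-k))] \leq k\lambda^2$.

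Next, apply Markov's inequality and optimize over $\lambda$. The lower tail is the easier case: $\Pr[Q - k \leq -s] \leq \exp(-\lambda s + k\lambda^2)$, minimized at $\lambda = s/(2k)$ to give $\exp(-s^2/(4k))$; setting $s = 2\sqrt{kx}$ then produces the claimed $\exp(-x)$. For the upper tail, $\Pr[Q - k \geq t] \leq \exp(-\lambda t + k\lambda^2/(1-2\lambda))$, and the clever step is to choose $\lambda$ satisfying $\lambda/(1-2\lambda) = \sqrt{x/k}$. A short algebraic verification shows that this choice makes $\lambda t - k\lambda^2/(1-2\lambda) = x$ precisely when $t = 2\sqrt{kx} + 2x$, finishing the proof.

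The hard part will be the upper tail: the constraint $\lambda < 1/2$ together with the sub-exponential right tail of $\chi^2_k$ means that the naive sub-Gaussian choice $\lambda \propto \sqrt{x/k}$ is not sharp enough to capture the $+2x$ correction. The substitution $\lambda/(1-2\lambda) = \sqrt{x/k}$ is precisely what linearizes the optimization and produces the mixed linear-plus-square-root deviation $2\sqrt{kx} + 2x$ in the theorem statement; this is the technical feature that distinguishes Laurent--Massart-style bounds from Hoeffding-type sub-Gaussian inequalities.
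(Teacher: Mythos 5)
The paper does not prove this theorem: it cites it directly as a restatement of Lemma~1 of Laurent and Massart (2000), so there is no internal proof to compare against. Your proposal supplies a complete proof, and it is correct; moreover it is essentially the argument Laurent and Massart themselves give, namely a Cram\'er--Chernoff bound on the centered chi-squared variable with the log-MGF controlled by elementary scalar inequalities. All the key steps check out: the termwise comparison $\sum_{n\geq 2} u^n/n \leq \sum_{n\geq 2} u^n/2$ gives $\log E[e^{\lambda(Q-k)}] \leq k\lambda^2/(1-2\lambda)$; the inequality $\log(1+v)\geq v-v^2/2$ gives $\log E[e^{-\lambda(Q-k)}] \leq k\lambda^2$; the lower-tail optimization at $\lambda=s/(2k)$ with $s=2\sqrt{kx}$ yields $e^{-x}$; and the substitution $\lambda/(1-2\lambda)=\sqrt{x/k}$ (equivalently $\lambda = a/(1+2a)$ with $a=\sqrt{x/k}$) makes the upper-tail exponent collapse to $\lambda(t-ka)=ka^2=x$ when $t=2\sqrt{kx}+2x$, with the constraint $\lambda<1/2$ automatically satisfied. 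Your closing remark correctly identifies the essential feature: the Laurent--Massart upper tail is Bernstein-type, not sub-Gaussian, and the implicit choice of $\lambda$ is what produces the extra $2x/k$ term. One cosmetic note unrelated to your argument: the theorem statement in the paper says ``for any $\beta>0$'' but the bounds are stated in terms of $x$; this is a typo carried over from the source, and you are right to read the free parameter as $x>0$.
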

\noindent Combining~\cref{thm:chi2_tails} with~\cref{lem:proj_gaussian}, we get the following useful corollary.
\begin{corollary}
  \label{cor:guassian_jl}
  Let $G \in \Re^{k \times d}$ be a random matrix where each $G_{ij} \sim \normal(0, \frac{1}{k})$ and let $c_\delta = \sqrt{1+2\sqrt{(\ln \tfrac 1 \delta)/{k}} + 2(\ln \tfrac 1 \delta)/{k}}$. Let $\sigma = 2 c_\delta \sqrt{\ln \frac 2 \delta}/{\eps}$. Then for any $\vx \in \Re^d$ with $\vx \leq 1$,
  \begin{align*}
    (G, \normal(\mathbf{0}, \sigma^2 \mathbb{I}_d)) \edapproxp{\eps}{2\delta} (G, G\vx + \normal(\mathbf{0}, \sigma^2 \mathbb{I}_d)).
  \end{align*}
\end{corollary}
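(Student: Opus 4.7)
The plan is to verify that the random matrix $G$ in the statement satisfies the tail condition \eqref{eq:subg_tails} of \cref{lem:proj_gaussian} with the specified constant $c_\delta$, at which point the corollary follows immediately by invoking that lemma.

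First I would observe that for a fixed $\vx\in\Re^d$ with $\|\vx\|\leq 1$, each coordinate of $G\vx$ is $(G\vx)_i = \sum_j G_{ij}x_j$, a linear combination of independent $\normal(0,1/k)$ variables, and is therefore distributed as $\normal(0, \|\vx\|^2/k)$. The coordinates are independent across $i$ because the rows of $G$ are independent. Consequently $k\|G\vx\|^2/\|\vx\|^2$ is a sum of $k$ independent squared standard Gaussians, i.e.\ distributed as $\chi_k^2$.

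Next I would apply the upper tail bound of \cref{thm:chi2_tails} with $x=\ln\frac{1}{\delta}$, which yields
\begin{align*}
\Pr\!\left[\tfrac{1}{k}\cdot\tfrac{k\|G\vx\|^2}{\|\vx\|^2} \geq 1 + 2\sqrt{(\ln\tfrac{1}{\delta})/k} + 2(\ln\tfrac{1}{\delta})/k\right] \leq \delta.
\end{align*}
Taking square roots and using $\|\vx\|\leq 1$ gives $\Pr[\|G\vx\|\geq c_\delta]\leq \delta$ for exactly the $c_\delta$ defined in the corollary statement. This is precisely hypothesis \eqref{eq:subg_tails}.

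Finally, with hypothesis \eqref{eq:subg_tails} verified, I would invoke \cref{lem:proj_gaussian} with this $c_\delta$ and the prescribed $\sigma = 2c_\delta\sqrt{\ln\tfrac{2}{\delta}}/\eps$; the conclusion of the lemma is exactly the conclusion of the corollary. There is no real obstacle here: the argument is essentially bookkeeping to match the $\chi^2_k$ tail bound with the $c_\delta$ appearing in \cref{lem:proj_gaussian}. The only minor subtlety to keep in mind is that the tail condition must hold for every $\vx$ with $\|\vx\|\leq 1$ (not just unit vectors), but since the chi-square calculation scales with $\|\vx\|^2$ and $c_\delta$ is defined using the bound at $\|\vx\|=1$, the inequality only becomes slacker for shorter vectors.
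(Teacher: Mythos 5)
Your proof is correct and matches the paper's intended argument: the paper presents this as a direct corollary obtained by "combining" \cref{thm:chi2_tails} with \cref{lem:proj_gaussian}, and you have simply spelled out that combination — using the $\chi_k^2$ distribution of $k\|G\vx\|^2/\|\vx\|^2$ and the upper tail bound with $x = \ln\frac{1}{\delta}$ to verify condition~\eqref{eq:subg_tails} with the stated $c_\delta$, then invoking the lemma. The remark about the bound only tightening for $\|\vx\|<1$ is a correct and appropriate sanity check.
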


\section{Warm-up: Secret Sharing Real-valued Vectors}
As a prelude to our result on norm verification, we first show how the standard secret sharing protocol extends to real-valued vectors, when allowing for Differential secrecy. Consider the protocol for secret-sharing a real-valued vector of norm at most $1$ between $S$ servers shown in Algorithm~\ref{alg:secret-share}.
\begin{algorithm}\DontPrintSemicolon
  \caption{Secret Sharing a real vector.}
  \label{alg:secret-share}

  \BlankLine
  \Fn{Prover($\vx$)}{
  \KwIn{Vector $\vx \in \Re^d$ with $\|\vx\| \leq 1$.}
  \KwParam{$\sigma_{SS} \in \Re$.}
  Generate $\vg_1, \ldots, \vg_{S-1} \sim \normal(\mathbf{0}, \sigma_{SS}^2 \mathbb{I}_d)$ using private randomness.\\
  Send $\vx - \sum_{i=1}^{S-1}\vg_i$ to Verifier 0.\\
  \For{$i=1\ldots S-1$}{
  Send $\vg_{i}$ to Verifier $i$.
  }
  }
\end{algorithm}

To prove the differential secrecy for this protocol, we show a simulator for any subset of verifiers in Algorithm~\ref{alg:secret-share-sim}.
\thispagestyle{empty}
\begin{algorithm}\DontPrintSemicolon
  \caption{Simulator for Algorithm~\ref{alg:secret-share}.}
  \label{alg:secret-share-sim}

  \BlankLine
  \Fn{Simulator($T \subsetneq [S]$)}{
  \KwIn{$T$ proper subset of $S$}
  \KwParam{$\sigma_{SS} \in \Re$.}
  \For{$i \in T$}{
  \If{$i \neq 0$}{
     Generate $\vg_i \sim \normal(\mathbf{0}, \sigma_{SS}^2 \mathbb{I}_d)$.\\
     Send $\vg_i$ to Verifier $i$.
  }
  }
  \If{$0 \in T$}{
  Generate $\vg \sim \normal(\mathbf{0}, (S-|T|) \sigma_{SS}^2\mathbb{I}_d)$.\\
  Send $\vg  - \sum_{i\in T; i \neq 1} \vg_i$ to Verifier 0.\\
  }
  }

\end{algorithm}
We next argue that this secret sharing scheme is differentially secure.
\begin{theorem}
  \label{thm:dzk_ss}
  Fix any $T \subsetneq [S]$. Then Prover($\vx$)$|_T \edapprox$ Simulator($T$) for $(S-|T|)\sigma_{SS}^2 \geq 4 \ln \frac 2 \delta / \eps^2$.
\end{theorem}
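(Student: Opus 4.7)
The plan is to split into two cases based on whether server $0$ lies in $T$, and in the nontrivial case to reduce the indistinguishability to a single application of the Gaussian mechanism (\cref{lem:gaussian}) after conditioning on the shares that are common to both distributions.

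First I would dispatch the easy case $0 \notin T$. Then every message Verifier $i \in T$ receives from the real prover is an independent $\normal(\mathbf{0}, \sigma_{SS}^2 \mathbb{I}_d)$ sample, which is identical to the simulator's distribution; the two are therefore perfectly indistinguishable and hence $(\eps,\delta)$-close for free.

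Next I would address the substantive case $0 \in T$. Let $T' = T \setminus \{0\}$ and note that in both the real execution and the simulator, Verifiers in $T'$ each receive an independent $\normal(\mathbf{0}, \sigma_{SS}^2 \mathbb{I}_d)$ sample, so I can couple the two experiments to use the same $\{\vg_i\}_{i \in T'}$. It then suffices to show that, conditioned on $\{\vg_i\}_{i \in T'}$, Verifier $0$'s message is $(\eps,\delta)$-close in the two experiments. In the real execution, Verifier $0$ receives
\[
  \vx - \sum_{i \in T'} \vg_i - \sum_{i \in [S]\setminus T} \vg_i,
\]
and the indices $i \in [S]\setminus T$ are all in $\{1,\dots,S-1\}$ with $|[S]\setminus T| = S - |T|$, so the last sum is distributed as $\normal(\mathbf{0}, (S-|T|)\sigma_{SS}^2 \mathbb{I}_d)$ and is independent of the conditioning. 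In the simulator (reading the indicated index constraint as $i \neq 0$), Verifier $0$ receives $\vg - \sum_{i \in T'} \vg_i$ with $\vg \sim \normal(\mathbf{0}, (S-|T|)\sigma_{SS}^2 \mathbb{I}_d)$. The two conditional distributions differ only by the shift $\vx$, with $\|\vx\|\leq 1$, so \cref{lem:gaussian} with noise scale $\sigma^2 = (S-|T|)\sigma_{SS}^2$ delivers $(\eps,\delta)$-closeness exactly when $(S-|T|)\sigma_{SS}^2 \geq 4 \ln(2/\delta)/\eps^2$, which is the hypothesis.

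Finally I would remove the conditioning: because the $\{\vg_i\}_{i \in T'}$ have the same marginal distribution in both experiments and the conditional laws are $(\eps,\delta)$-close, post-processing / integration over the common distribution yields $(\eps,\delta)$-closeness of the joint views, completing the proof. No step is genuinely hard; the only place to be careful is the bookkeeping of which shares are re-used across the two experiments so that Verifier $0$'s message is the only coordinate contributing nontrivial privacy loss.
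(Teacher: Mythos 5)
Your proposal is correct and takes essentially the same route as the paper: both split on whether $0 \in T$, observe perfect simulation for the verifiers in $T \setminus \{0\}$, and reduce Verifier $0$'s conditional message to a single application of the Gaussian mechanism with effective variance $(S-|T|)\sigma_{SS}^2$. You even correctly read past the paper's typo (``$i \neq 1$'' in the simulator) by treating the index constraint as $i \neq 0$.
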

\begin{proof}
  If $0 \not\in T$, the simulation is perfect: indeed each verifier in $T$ receives an independent Gaussian vector with variance $\sigma_{SS}^2\mathbb{I}_d$ in both distributions.  When $0 \in T$, consider the distribution of the message to Verifier $0$ conditioned on $T \setminus [0]$.

  The simulator output to Verifier $0$ is distributed as $\normal(-\sum_{i \in T; i\neq -0} \vg_i, (S-|T|) \sigma_{SS}^2\mathbb{I}_d)$. The message to verifier 0 from the prover, conditioned on
  $\{\vg_{i}\}_{i\in T: i \neq 0}$ is distributed as $\normal(\vx-\sum_{i \in T; i\neq 0} \vg_i, (S-|T|) \sigma_{SS}^2\mathbb{I}_d)$. The claim now follows from the privacy of the Gaussian mechanism (\cref{lem:gaussian}).
\end{proof}
The differential secrecy implies that an honest prover's privacy is protected against an arbitrary collusion of verifiers short of all of them. Note also that by making $\sigma_{SS}$ larger, we can improve the privacy cost. A larger $\sigma_{SS}$ only costs us in terms of the precision to which these messages should be communicated to ensure that the sum of secret shares is close to $\vx$. Note that we can post-process these vectors (both in the algorithm and its simulation), e.g. by rounding or truncation. By the post-processing property of differential privacy, the differential secrecy is maintained.

\section{Differential Zero Knowledge Proofs of bounded norm}

We next describe our DZK protocol to verify a Euclidean norm bound.
The first step is to secret-share the vector between the two verifiers as in the previous section. The rest of the protocol only involves the verifiers; the prover code therefore is identical to secret-sharing.

The second step is norm estimation and happens amongst the verifiers.
As a first cut, suppose that the servers aggregate their shares, while adding noise to each share to preserve privacy. This would require adding $d$-dimensional gaussian noise to each share. This noise being fresh and independent will contribute to the norm of the computed sum, which will now be about $\sqrt{d}$, and will have variance growing polynomially with $d$. This will make it impossible to estimate the norm better than some polynomial in $d$, and thus our gap $\rho$ will grow polynomially with the dimension.

To improve on this, we will use random projection into a $k$-dimensional space for a parameter $k$ independent of the dimension. Being a lower-dimensional object, a projection can be privately estimated much more accurately. The choice of the projection dimension $k$ will give us a trade-off between the privacy parameters and the gap assumption. Intuitively, we rely on the Johnson-Lindenstrauss lemma, which says that the Euclidean norm of a vector is approximately preserved under random projections. Since projection is a linear operator, computing the projection of a secret-shared vector is straight-forward. Verifier $0$ here takes the special role of collecting an estimate of a random projection of $\vx$, computing its norm and sharing the {\textsf Accept/Reject} bit.

\begin{algorithm}\DontPrintSemicolon
  \caption{Protocol for Norm Verification}
  \label{alg:norm-verify}
\thisfloatpagestyle{empty}
  \SetKwData{Accept}{Accept}
  \KwIn{Prover has a vector $\vx \in \Re^d$}
  \KwOut{Verifiers must agree on \Accept.}

  \BlankLine
  \Fn{Prover($\vx$)}{
  \KwIn{Vector $\vx \in \Re^d$ with $\|\vx\| \leq 1$.}
  \KwParam{$\sigma_{SS} \in \Re$.}
  Generate $\vg_1, \ldots, \vg_{S-1} \sim \normal(\mathbf{0}, \sigma_{SS}^2 \mathbb{I}_d)$ using private randomness.\\
  Send $\vx - \sum_{i=1}^{S-1}\vg_i$ to Verifier 0.\\
  \For{$i=1\ldots S-1$}{
  Send $\vg_{i}$ to Verifier $i$.
  }
  }
\setcounter{AlgoLine}{0}
\BlankLine
\hrule
\BlankLine
  \Fn{Verifier-0}{
  \KwParam{Integer $k$. Threshold $\tau \in \Re$.}
  Receive $\vz_0$ from Prover. \tcp*{Expected to be $\vx - \sum_{i=1}^{S-1}\vg_i$}
  Generate $\vW \in \Re^{k\times d}$ with each $W_{ij} \sim \normal(0, \frac{1}{k})$ using private randomness.\\
  \tcp*{This version assumes honest Verifier 0. To allow malicious Verifier 0, $W$ is generated using randomness shared amongst verifiers.}
  Send $\vW$ to Verifiers $1, \ldots, S-1$.\\
  \For{$i=1\ldots S-1$}{
  Receive $\vy_i$ from Verifier $i$. \tcp*{Expect $\vy_i = \vW  \vg_i + Noise$.}
  }
  Compute $\vv = \vW \vz_0 + \sum_{i=1}^{S-1} \vy_i + \normal(0, \sigma_v^2 \mathbb{I}_k)$. \tcp*{Expect $\vv = \vW\vx + Noise$.}
  \If{$|\vv| \geq \tau$}{\Accept = 0}\Else{\Accept = 1}
  Send \Accept to Verifiers $1,\ldots, S-1$.
}
\setcounter{AlgoLine}{0}
\vspace{4pt}
\hrule
\vspace{4pt}
  \Fn{Verifier-$i$ ($i \geq 1$)}{
  \KwParam{Integer $k$. Noise scale $\sigma_{v} \in \Re$.}
  Receive $\vz_i$ from Prover. \tcp*{Expected to be $\vg_i$}
  Receive $\vW \in \Re^{k\times d}$ from Verifier-0.\\
  Compute $\vy_i = \vW \vz_i + \normal(0, \sigma_v^2 \mathbb{I}_k)$.\\
  Send $\vy_i$ to Verifier-0.\\
  Receive \Accept from Verifier-0.\\
}

\end{algorithm}

\newcommand{\Accept}{\normalfont{\mathrm{\textsf{Accept}}}}
We start with establishing compeleteness, which will determine the acceptance threshold $\tau$. We will then show soundness for an appropriate $\rho$.
\begin{theorem}[Completeness]
  \label{thm:completeness}
  Suppose that the prover and the verifiers are honest and the $\|\vx\|\leq 1$. Then for $\tau \geq \sqrt{(\frac{1}{k}+|S|\sigma_v^2)(k+2\ln\frac 1 \beta + 2\sqrt{k\ln\frac 1 \beta})}$,
  \begin{align*}
    \Pr[\Accept = 1] \geq 1 - \beta.
  \end{align*}
\end{theorem}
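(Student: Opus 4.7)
The plan is to track the distribution of the quantity $\vv$ computed by Verifier 0 in the honest execution, reduce $\|\vv\|^2$ to a chi-square tail computation, and read off $\tau$ from the chi-square bound in \cref{thm:chi2_tails}.

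First I would unpack the honest transcript. Summing the messages, the secret shares telescope as $\vz_0 + \sum_{i=1}^{S-1} \vz_i = \vx$, so the $S$ independent Gaussian noises added at the $S$ verifiers aggregate to give
\[
\vv = \vW\vx + \vn, \qquad \vn \sim \normal(\mathbf{0}, S\sigma_v^2 \mathbb{I}_k),
\]
independent of $\vW$. Here I am using that the secret-sharing noise cancels exactly in the sum, and that the only stochastic contribution that survives is the per-verifier noise plus the randomness of the projection itself.

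Next I would compute the marginal distribution of $\vv$. Since each row of $\vW$ consists of i.i.d.\ $\normal(0, 1/k)$ entries and the rows are independent, the coordinates of $\vW\vx$ are i.i.d.\ $\normal(0, \|\vx\|^2/k)$. Adding the independent Gaussian noise $\vn$, the coordinates of $\vv$ are i.i.d.\ $\normal(0, \sigma_x^2)$ with $\sigma_x^2 = \|\vx\|^2/k + S\sigma_v^2 \leq \sigma_{\max}^2 := 1/k + S\sigma_v^2$ by the hypothesis $\|\vx\|\leq 1$. Therefore $\|\vv\|^2/\sigma_x^2$ has exactly a $\chi_k^2$ distribution, and hence $\|\vv\|^2/\sigma_{\max}^2$ is stochastically dominated by a $\chi_k^2$ variable.

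Finally I would invoke the upper tail in \cref{thm:chi2_tails} with $x = \ln\frac{1}{\beta}$, giving
\[
\Pr\!\left[\|\vv\|^2 \geq \sigma_{\max}^2 \bigl(k + 2\sqrt{k\ln\tfrac{1}{\beta}} + 2\ln\tfrac{1}{\beta}\bigr)\right] \leq \beta,
\]
which is exactly $\Pr[\|\vv\| \geq \tau] \leq \beta$ for the stated $\tau$. Since $\Accept = 1$ iff $\|\vv\| < \tau$, this yields the claimed completeness bound. There is no serious obstacle; the only subtlety worth being careful about is that $\vW\vx$ is a Gaussian vector (because $\vW$ itself is Gaussian, not just a deterministic projection), so its coordinates combine cleanly with the additive noise into a single isotropic Gaussian and the entire norm can be handled with a single chi-square bound rather than a two-step Johnson--Lindenstrauss plus noise argument.
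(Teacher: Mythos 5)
Your proof follows essentially the same route as the paper's: exploit that the shares telescope to $\vx$, observe that for Gaussian $\vW$ the vector $\vW\vx$ is itself Gaussian (so that $\vv$ becomes a single isotropic Gaussian with per-coordinate variance $\|\vx\|^2/k + S\sigma_v^2$), and apply the upper $\chi_k^2$ tail from \cref{thm:chi2_tails} with $x = \ln\tfrac{1}{\beta}$. You are slightly more careful than the paper at one spot: you replace $\|\vx\|^2$ with $1$ via stochastic domination, whereas the paper writes the corresponding line as an equality, which implicitly assumes $\|\vx\|=1$ (it should be $\leq$); the conclusion is unaffected.
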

\begin{proof}
Under the assumptions, $W\vx$ is distributed as $\normal(0, \frac{\|\vx\|_2^2}{k} \Id)$. The noise added by each server is distributed as $\normal(0, \sigma_v^2\Id)$, and all of these Gaussian random variables are independent. Thus $\vv$ computed by Verifier 0 is distributed as $\normal(0, (\frac{\|\vx\|_2^2}{k} + |S|\sigma_v^2)\Id)$, and its squared norm is distributed as $(\frac{\|\vx\|_2^2}{k} + |S|\sigma_v^2) Q$, where $Q$ is a $\chi_k^2$ random variable. Thus
\begin{align*}
  \Pr[\|\vv\|_2^2 \geq \tau^2] &= \Pr[Q \geq (\frac{1}{k}+|S|\sigma_v^2)^{-1}\tau^2].
\end{align*}
Plugging the upper tail bounds from~\cref{thm:chi2_tails}, the result follows.
\end{proof}

\begin{theorem}[Soundness]
  \label{thm:soundness}
  Suppose that the verifiers are honest and suppose that $\|\sum_{i=0}^{S-1} z_i\| \geq \rho$, where $z_i$ is the message to verifier $i$. Then for $\rho^2 \geq \frac{k\tau^2}{k-2\sqrt{k\ln \frac 1 \beta}} - k|S|\sigma_v^2$,
  \begin{align*}
    \Pr[\Accept =  1] \leq \beta.
  \end{align*}
\end{theorem}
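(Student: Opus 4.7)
The plan is to mirror the completeness argument, but now extract a lower-tail bound on $\|\vv\|^2$ in order to bound the probability that Verifier 0 mistakenly accepts. The only real structural observation is that, because Verifier 0 is honest, the projection matrix $\vW$ is drawn from its private randomness after the prover has committed to the messages $\vz_0,\dots,\vz_{S-1}$, so we may condition on the effective prover input $\vx' := \sum_{i=0}^{S-1}\vz_i$ and treat $\vW$ as an independent Gaussian ensemble with i.i.d.\ $\normal(0,1/k)$ entries. The malicious-prover assumption then reduces to analyzing an arbitrary but fixed vector $\vx'$ of norm at least $\rho$.

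Conditioned on $\vx'$, linearity gives
\[
\vv \;=\; \vW\vx' + \sum_{i=1}^{S-1}\normal(\mathbf{0},\sigma_v^2\Id_k) + \normal(\mathbf{0},\sigma_v^2\Id_k),
\]
where all Gaussians and the rows of $\vW$ are mutually independent. Each coordinate of $\vW\vx'$ is $\normal(0,\|\vx'\|^2/k)$, and the $|S|$ verifier-noise terms contribute variance $|S|\sigma_v^2$ per coordinate. Hence $\vv \sim \normal\!\bigl(\mathbf{0},\,(\|\vx'\|^2/k + |S|\sigma_v^2)\Id_k\bigr)$, and $\|\vv\|_2^2 \overset{d}{=} (\|\vx'\|^2/k + |S|\sigma_v^2)\,Q$ for a $\chi_k^2$ random variable $Q$.

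Next, I would substitute into the acceptance event. Verifier 0 outputs $\Accept=1$ exactly when $\|\vv\|_2^2 < \tau^2$, i.e.\ when $Q < \tau^2/(\|\vx'\|^2/k + |S|\sigma_v^2)$. Applying the lower-tail estimate from \cref{thm:chi2_tails} with $x = \ln(1/\beta)$ gives $\Pr[Q \le k - 2\sqrt{k\ln(1/\beta)}] \le \beta$. Thus it suffices to verify that the threshold $\tau^2/(\|\vx'\|^2/k + |S|\sigma_v^2)$ lies at or below $k - 2\sqrt{k\ln(1/\beta)}$. Rearranging this inequality yields exactly the hypothesis
\[
\|\vx'\|^2 \;\ge\; \frac{k\tau^2}{k - 2\sqrt{k\ln\tfrac{1}{\beta}}} - k|S|\sigma_v^2,
\]
which is the assumed lower bound on $\rho^2$, since $\|\vx'\| \ge \rho$.

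The only genuine care-point is the order of operations: because the prover is adversarial, one must be explicit that $\vW$ is generated independently of $\vz_0,\dots,\vz_{S-1}$, so that conditioning on the adversary's messages still leaves $\vW\vx'$ Gaussian with the claimed covariance. After this is pinned down, everything reduces to the same $\chi_k^2$ tail bound used in the completeness proof, applied in the opposite direction. No additional machinery (no union bounds over $\vx'$, no argument about the randomness of coordinates of $\vx'$) is needed, because the rotational invariance of $\vW$ makes the distribution of $\|\vW\vx'\|$ depend only on $\|\vx'\|$.
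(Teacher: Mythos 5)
Your proof is correct and follows essentially the same route as the paper: express $\|\vv\|_2^2$ as a scaled $\chi_k^2$ variable with scale $\|\vx'\|^2/k + |S|\sigma_v^2$, and invoke the lower-tail bound of \cref{thm:chi2_tails} to bound the probability that $\|\vv\|_2^2$ falls below $\tau^2$. The only additions are your explicit remarks that $\vW$ is drawn independently after the adversary commits to $\vz_0,\dots,\vz_{S-1}$ (so $\vW\vx'$ is genuinely Gaussian with the claimed covariance) and that $\|\vx'\|\geq\rho$ suffices by monotonicity; these are correct clarifications that the paper leaves implicit.
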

\begin{proof}
  As in the proof of~\cref{thm:completeness}, now $\|\vv\|_2^2$ is distributed as $(\frac{\rho^2}{k} + |S|\sigma_v^2)Q$ for a $\chi_k^2$ random variable $Q$. Using the lower tail bounds from~\cref{thm:chi2_tails}, it suffices to ensure
  \begin{align*}
    (\frac{\rho^2}{k} + |S|\sigma_v^2)(k-2\sqrt{k\ln \frac 1 \beta}) &\geq \tau^2.
  \end{align*}
  Rearranging, the claim follows.
\end{proof}

Some discussion on $k$ is in order. A small $k$ ensures that we need to add less noise and thus get better estimates. At the same time, larger $k$ ensures stronger concentration of the $\chi_k^2$ random variable. For intuition, we next estimate the bound on $\rho^2$ from~\cref{thm:soundness}, plugging in $\tau$ from~\cref{thm:completeness}. Setting $\lambda = \frac{\sqrt{\ln \frac 1 \beta}}{k}$ and assuming $\lambda$ is small enough, we can write
\begin{align*}
  \rho^2 &= \frac{k\tau^2}{k-2\sqrt{k\ln \frac 1 \beta}} - k|S|\sigma_v^2\\
  &= (1+k|S|\sigma_v^2)\frac{k+2\ln\frac 1 \beta + 2\sqrt{k\ln\frac 1 \beta}}{k-2\sqrt{k\ln \frac 1 \beta}} - k|S|\sigma_v^2\\
  &= (1+k|S|\sigma_v^2)\frac{1+2\lambda + 2\sqrt{\lambda}}{1-2\sqrt{\lambda}} - k|S|\sigma_v^2\\
  &\approx (1+k|S|\sigma_v^2)(1+O(\sqrt{\lambda})) - k|S|\sigma_v^2\\
  &= 1+O(k|S|\sigma_v^2\sqrt{\lambda})\\
  &\approx 1 + O(|S|\sigma_v^2 k^{\frac 1 2} (\ln \frac 1 \beta)^{\frac 1 4})
\end{align*}
Taking $k=\Theta(\sqrt{\ln \frac 1 \beta})$ suffices to ensure $\lambda$ is small enough for the approximations above to be valid. This leads to $\rho^2 = \Theta(|S|\sigma_v^2 \sqrt{\ln \frac 1 \beta})$. In practice, one may want to use the exact cdf for the $\chi_k^2$ distribution instead of the tail bounds used in the theorems.

\begin{algorithm}\DontPrintSemicolon
  \caption{Simulator for Algorithm~\ref{alg:norm-verify}.}
\label{alg:norm-verify-sim}

  \BlankLine
  \Fn{Simulator($T \subsetneq [S]; 0 \not\in T$)}{
  \KwIn{$T$ proper subset of $S$}
  \KwParam{$\sigma_{SS}, \sigma_v, \tau \in \Re$, integer $k$.}
  \SetKwData{Accept}{Accept}

  \For{$i \in T$}{
     Generate $\vg_i \sim \normal(\mathbf{0}, \sigma_{SS}^2 \mathbb{I}_d)$.\\
     Send $\vg_i$ to Verifier $i$.
  }
  Generate $\vW \in \Re^{k\times d}$ with each $W_{ij} \sim \normal(0, \frac{1}{k})$.\\
  Send $\vW$ to each Verifier in $T$.\\
  Receive $\{\vy_i\}_{i \in T}$.\\
  Compute $\vv_{Sim} = \sum_{i \in T} (\vy_i - \vW \vg_i) + \normal(\mathbf{0}, (S-|T|)\sigma_v^2 \mathbb{I}_k)$.\\
  \If{$|\vv_{Sim}| \geq \tau$}{\Accept = 0}\Else{\Accept = 1}
  Send \Accept to all verifiers.
}

\end{algorithm}
We now prove the differential zero knowledge property of the algorithm. We assume that verifier $0$ is honest. We will then relax this assumption using shared randomness.
\begin{theorem}[DZK assuming honest Verifier $0$]
  \label{thm:dzk_not_zero}
  Suppose that $\|\vx\|_2 \leq 1$. If the prover and Verifier-0 are honest, then for any $T \subset [S]\setminus \{0\}$, $T$'s view is $(\eps,\delta)$-DZK as long as $\sigma_v \geq 2 c_\delta \sqrt{\ln \frac 4 \delta}/{\eps}$.
\end{theorem}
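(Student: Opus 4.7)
The plan is to compare the real view of $T$ against the simulator's output message by message, reducing the DZK claim to \cref{cor:guassian_jl}. Because $0 \notin T$, the only messages that $T^c$ sends to $T$ are: (i) the secret shares $\{\vg_i\}_{i\in T}$ sent by the prover, (ii) the projection matrix $\vW$ sent by Verifier-0, and (iii) the single Accept bit sent by Verifier-0. Under the honest-but-curious assumption, the $\vg_i$'s are i.i.d.\ $\normal(\mathbf{0},\sigma_{SS}^2 \Id)$ in both the real world and the simulator, and $\vW$ has the same Gaussian-ensemble distribution in both, each independent of $\vx$. Thus these two messages can be coupled identically and all of the work lies in handling the Accept bit.

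The Accept bit is a deterministic threshold of the vector $\vv$ computed at Verifier-0, so by post-processing it suffices to show $(\vW, \vv) \edapprox (\vW, \vv_{Sim})$. I would first simplify both sides. Substituting $\vz_0 = \vx - \sum_{i=1}^{S-1}\vg_i$ and $\vy_i = \vW \vg_i + \normal_i$ into $\vv = \vW \vz_0 + \sum_{i=1}^{S-1}\vy_i + \normal_0$, the $\vW \vg_i$ contributions telescope, leaving $\vv = \vW \vx + \sum_{i=0}^{S-1}\normal_i$, which conditional on $\vW$ is distributed as $\vW\vx + \normal(\mathbf{0}, S\sigma_v^2 \Id)$. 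On the simulator side, for each $i\in T$ the honest Verifier-$i$ computes $\vy_i = \vW\vg_i + \normal_i$ using the simulator's own $\vg_i$, so $\vy_i - \vW\vg_i = \normal_i$, and hence $\vv_{Sim} = \sum_{i\in T}\normal_i + \normal(\mathbf{0},(S-|T|)\sigma_v^2\Id)$ is distributed conditional on $\vW$ as $\normal(\mathbf{0}, S\sigma_v^2\Id)$. Thus the two noisy projections differ only by the additive shift $\vW \vx$.

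Next I would invoke \cref{cor:guassian_jl} with its $\delta$ parameter replaced by $\delta/2$, which turns its $2\delta$ closeness into $\delta$ and replaces the noise requirement by $\sigma \ge 2c_\delta\sqrt{\ln(4/\delta)}/\eps$. The noise present in $\vv$ has total standard deviation $\sqrt{S}\,\sigma_v \ge \sigma_v$, so the assumed lower bound on $\sigma_v$ suffices, and the corollary gives $(\vW, \vv_{Sim}) \edapprox (\vW, \vv)$. Post-processing by the threshold at $\tau$ preserves this closeness, and combining with the perfectly coupled shares $\{\vg_i\}_{i\in T}$ (which are independent of all other randomness in both worlds) yields the claimed $(\eps,\delta)$-DZK.

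The main bookkeeping step is the telescoping of the $\vW\vg_i$ terms and the correct accounting of the $S$ independent Gaussian noise contributions; once $\vv$ and $\vv_{Sim}$ are reduced to the canonical form appearing in \cref{cor:guassian_jl}, the remaining argument is a direct invocation of that corollary together with standard post-processing.
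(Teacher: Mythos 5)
Your high-level plan — reduce to the Accept bit via post-processing, write $\vv$ and $\vv_{Sim}$ in canonical form, then invoke \cref{cor:guassian_jl} with $\delta/2$ — is the same as the paper's, but there is a gap in the way you model the adversary. The theorem only requires the prover and Verifier-0 to be honest; the verifiers in $T$ may be \emph{malicious}. You substitute $\vy_i = \vW\vg_i + \normal_i$ for every $i$, and on the simulator side you write that ``the honest Verifier-$i$'' for $i\in T$ computes $\vy_i = \vW\vg_i+\normal_i$. A malicious verifier in $T$ can send an arbitrary $\vy_i$ depending on $\vW$ and the share $\vg_i$ it received, so the $\vW\vg_i$ terms do not telescope for $i\in T$ and your claimed distribution $\vv = \vW\vx + \normal(\mathbf{0},S\sigma_v^2\mathbb{I}_k)$ is not what Verifier-0 actually computes.

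The paper's proof handles this by splitting the sum over $i\in T$ and $i\notin T$: only the honest verifiers satisfy $\vy_i - \vW\vg_i = \normal(\mathbf{0},\sigma_v^2\mathbb{I}_k)$, while $A := \sum_{i\in T}(\vy_i - \vW\vg_i)$ is kept as an arbitrary term. The key observation is that $A$ is a deterministic function of $\vW$, $\{\vg_i\}_{i\in T}$, and $T$'s strategy, all of which are identically distributed in the real and simulated worlds, so $A$ cancels in the comparison. One is then comparing $\vW\vx + A + \normal(\mathbf{0},(S-|T|)\sigma_v^2\mathbb{I}_k)$ against $A + \normal(\mathbf{0},(S-|T|)\sigma_v^2\mathbb{I}_k)$, and the noise that matters is $(S-|T|)\sigma_v^2$, not $S\sigma_v^2$. (Your $S\sigma_v^2$ claim overcounts even in the honest-but-curious case: the noise contributed by verifiers in $T$ is in $T$'s own view, so it cannot help hide $\vW\vx$ from $T$.) Because $S-|T|\geq 1$, the numerical conclusion you reach is still correct, but as written your argument only covers honest-but-curious verifiers in $T$ and must be repaired by keeping $A$ explicit as the paper does.
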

\begin{proof}
  The simulator is defined in Algorithm~\ref{alg:norm-verify-sim}. The simulator sends messages to verifiers in $T$ in steps 4, 6, and 13. The messages in steps 4 and 6 follows exactly the same distribution as that in the mechanism, with all $\vg_i$'s and the matrix $\vW$ being independent normal. The message in step 13 is the $\Accept$ bit, which is computed as a post-processing of the vector $\vv_{Sim}$ computed in step 8. The corresponding $\Accept$ bit in the protocol is obtained by the same post-processing of $\vv$ computed by Verifier 0 in step $6$. Since the prover is honest, we can write:
  \begin{align*}
    (\vW, \vv) &= (\vW, \vW \vz_0 + \sum_{i=1}^{S-1} \vy_i + \normal(0, \sigma_v^2 \mathbb{I}_k))\\
     &= (\vW, \vW (\vx - \sum_{i=1}^{S-1} \vg_i) + \sum_{i=1}^{S-1} \vy_i + \normal(0, \sigma_v^2 \mathbb{I}_k))\\
      &= (\vW, \vW (\vx - \sum_{i=1}^{S-1} \vg_i) + \sum_{i\in T; i \neq 0} \vy_i + \sum_{i\not\in T; i \neq 0} \vy_i + \normal(0, \sigma_v^2 \mathbb{I}_k))\\
       &= (\vW, \vW\vx  + \sum_{i\in T; i \neq 0} (\vy_i - \vW \vg_i) + \sum_{i\not\in T; i \neq 0} (\vy_i  - \vW \vg_i)+ \normal(0, \sigma_v^2 \mathbb{I}_k))\\
       &= (\vW, \vW\vx + \sum_{i\in T; i \neq 0} (\vy_i - \vW \vg_i) + \sum_{i\not\in T; i \neq 0} \normal(0, \sigma_v^2 \mathbb{I}_k) + \normal(0, \sigma_v^2 \mathbb{I}_k))\\
       &= (\vW, \vW\vx + \sum_{i\in T; i \neq 0} (\vy_i - \vW \vg_i) +  \normal(0, (S - |T|)\sigma_v^2 \mathbb{I}_k))\\
       &= (\vW, (\vW\vx  +  \normal(0, (S - |T|)\sigma_v^2 \mathbb{I}_k))  + \sum_{i\in T; i \neq 0} (\vy_i - \vW \vg_i))\\
       &\edapproxp{\eps}{\delta} (\vW, \normal(0, (S - |T|)\sigma_v^2 \mathbb{I}_k)  + \sum_{i\in T; i \neq 0} (\vy_i - \vW \vg_i))\\
       &= (\vW, \vv_{Sim})
  \end{align*}
  Here we have used Corollary~\ref{cor:guassian_jl} in the second to last step.
\end{proof}
The honest prover assumption is necessary to give privacy to the prover. The assumption on Verifier 0 being honest is necessary as well in the protocol as stated: a malicious Verifier 0 that can choose an adversarial $\vW$ can violate the privacy constraint. For example, a verifier that knows that the true $\vx$ lies in a certain $k$-dimensional subspace can choose the projection matrix $\vW$ to project to that subspace. This will make the projected vector to have length much larger than $1$, and invalidate the assumptions in~\cref{lem:proj_gaussian}. We next show that this is the only place where we need Verifier 0 to be honest. Thus given a distributed oracle for randomly selecting $\vW$, e.g. using shared randomness, we have privacy as long as one of the Verifiers is honest.

\begin{theorem}[DZK assuming randomly chosen $\vW$]
  \label{thm:dzk_zero}
  Suppose that $\|\vx\|_2 \leq 1$. Further suppose that the prover is honest and the matrix $\vW$ shared in Step 4 by Verifier $0$ is uniformly random. Then for any $T \subsetneq [S]$, $T$'s view is $(\eps+\eps',\delta+\delta')$-DZK as long as $\sigma_v \geq 2 c_\delta \sqrt{\ln \frac 4 \delta}/{\eps}$ and $\sigma_{SS} \geq 2 \sqrt{\ln \frac 2 {\delta'}} / \eps'$
\end{theorem}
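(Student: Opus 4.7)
The plan is to split on whether $0\in T$ and reduce to \cref{thm:dzk_not_zero}, \cref{lem:gaussian} and \cref{cor:guassian_jl} via a two-step hybrid argument. When $0\notin T$, Verifier~$0$'s random matrix $\vW$ has the same distribution whether it is chosen by an honest Verifier~$0$ or drawn from shared randomness, so \cref{thm:dzk_not_zero} already yields $(\eps,\delta)$-DZK, which is stronger than the claimed bound. I therefore focus on the case $0\in T$; since $T\subsetneq[S]$, I fix an arbitrary $j\in J:=\{1,\ldots,S-1\}\setminus T$, so that the share $\vg_j$ is hidden from $T$.

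First I would note that $\vv$ and the Accept bit are a deterministic post-processing of the reduced view $V':=(\vz_0,\{\vz_i\}_{i\in T\setminus\{0\}},\vW,\{\vy_i\}_{i=1}^{S-1})$ together with Verifier~$0$'s independent Gaussian noise, so it suffices to simulate $V'$. I then re-parameterize the prover's randomness by $\vg_j':=\vg_j-\vx$; under Real, $\vg_j'\sim\normal(-\vx,\sigma_{SS}^2\mathbb{I}_d)$, and substituting $\vg_j=\vg_j'+\vx$ into the protocol's formulas rewrites the view as $\vz_0=-\sum_{i\in T\setminus\{0\}}\vg_i-\sum_{i\in J\setminus\{j\}}\vg_i-\vg_j'$ (with no explicit $\vx$) and $\vy_j=\vW\vg_j'+\vW\vx+\normal(\mathbf{0},\sigma_v^2\mathbb{I}_k)$, while the other $\vz_i$'s and $\vy_i$'s are unchanged. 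I introduce the hybrid $V'_H$ obtained by instead sampling $\vg_j'\sim\normal(\mathbf{0},\sigma_{SS}^2\mathbb{I}_d)$, keeping every formula (including the $+\vW\vx$ term in $\vy_j$) intact. Since $V'_H$ is a deterministic function of $\vg_j'$ and independent fresh randomness, applying \cref{lem:gaussian} to the distribution of $\vg_j'$ with sensitivity $\|\vx\|\leq1$ and scale $\sigma_{SS}\geq 2\sqrt{\ln(2/\delta')}/\eps'$, followed by post-processing, yields $V'_{\mathrm{Real}}\edapproxp{\eps'}{\delta'}V'_H$.

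To pass from $V'_H$ to the simulator's view $V'_{\mathrm{Sim}}$, which drops the $+\vW\vx$ term in $\vy_j$, I would express both distributions as a common post-processing of a pair $(\vW,\vu)$: draw $\vg_j'$, the remaining $\vg_i$'s and the remaining per-verifier Gaussians afresh, assemble the other $\vz_i$'s and $\vy_i$'s via the protocol formulas, and set $\vy_j:=\vW\vg_j'+\vu$. With $\vu=\vW\vx+\normal(\mathbf{0},\sigma_v^2\mathbb{I}_k)$ this reproduces $V'_H$, and with $\vu=\normal(\mathbf{0},\sigma_v^2\mathbb{I}_k)$ it reproduces $V'_{\mathrm{Sim}}$. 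Because $\vW$ is drawn as in \cref{cor:guassian_jl}, that corollary gives $(\vW,\vW\vx+\normal(\mathbf{0},\sigma_v^2\mathbb{I}_k))\edapprox(\vW,\normal(\mathbf{0},\sigma_v^2\mathbb{I}_k))$, and post-processing propagates this to $V'_H\edapprox V'_{\mathrm{Sim}}$. Chaining the two approximate indistinguishabilities by simple composition yields $V'_{\mathrm{Real}}\edapproxp{\eps+\eps'}{\delta+\delta'}V'_{\mathrm{Sim}}$, and unfolding back to $\vv$ and Accept finishes the proof. The main subtlety I expect is exactly this interleaving: $\vg_j'$ appears in both $\vz_0$ and $\vy_j$, so the re-parameterization has to be arranged so that each hybrid step is a clean post-processing of one of the two Gaussian-mechanism-style indistinguishabilities, and it is the existence of $j\in J$ (guaranteed by $T\subsetneq[S]$) that makes this separation possible.
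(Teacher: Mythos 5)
Your proof is correct and follows the same decomposition as the paper's: split on whether $0 \in T$, handle $0 \notin T$ by \cref{thm:dzk_not_zero}, and for $0 \in T$ attribute an $(\eps',\delta')$ cost to the secret-sharing leakage and an $(\eps,\delta)$ cost to the projected-norm leakage, then compose. The paper's own proof of the $0 \in T$ case is a one-line sketch (``the secret sharing itself is $(\eps',\delta')$-DZK by \cref{thm:dzk_ss}; the rest of the protocol is $(\eps,\delta)$-DZK by repeating the proof of \cref{thm:dzk_not_zero}''), which leaves implicit how one separates the two sources of leakage given that the same hidden shares $\vg_i$ for $i \notin T$ appear both in $\vz_0$ and in the projected messages $\vy_i$. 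Your reparameterization $\vg_j' := \vg_j - \vx$ for a fixed $j \notin T$ is precisely the device that makes this separation explicit: after the change of variables, $\vx$ influences the view only through the mean of $\vg_j'$ and through the single additive $\vW\vx$ term in $\vy_j$, so each hybrid step is a clean post-processing of one Gaussian-mechanism-style indistinguishability (\cref{lem:gaussian} for the first, \cref{cor:guassian_jl} for the second). This is a more rigorous rendering of the argument the paper gestures at, and it correctly identifies $T\subsetneq[S]$ (existence of a hidden index $j$) as what enables the split. One cosmetic point, shared with the paper's statement: chaining two $(\eps_i,\delta_i)$-closeness relations via the triangle inequality formally yields $\delta_1 + e^{\eps_1}\delta_2$ (or symmetrically $\delta_2 + e^{\eps_2}\delta_1$) rather than $\delta_1+\delta_2$; this is standard slack and does not affect the substance.
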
\begin{proof}
The proof is nearly identical to the previous proof. When $0 \not\in T$, the theorem follows from~\cref{thm:dzk_not_zero}. When Verifier 0 is in the set, the secret sharing itself is $(\eps',\delta')$-DZK, by~\cref{thm:dzk_ss}. The rest of the protocol is $(\eps,\delta)$-DZK by repeating the proof of~\cref{thm:dzk_not_zero}. The result follows.
\end{proof}

\section{Application to Robust Secure Aggregation}
\label{sec:robust_secagg}
Our protocol for robust secure aggregation (Algorithm~\ref{alg:sec_agg}) builds on the additive secret shares with norm bound verification. The prover part of the protocol is nearly identical to secret sharing, with the only change being that the client sends its identifier $j$ with all the shares. We assume that each client has a unique identifier, though this assumption can be easily relaxed by having the client send a random nonce instead of its identifier.

The verifiers execute the norm verification protocol for each client. Verifier 0 constructs the set of indices $J^*$ that pass the norm verification and shares it with all the verifiers. The verifiers optionally check that $J^*$ is large enough; this part is not needed for our summation protocol, but can be useful to ensuring that the sum itself is differentially private. The verifiers now add up the secret shares for the provers in $J^*$ and share the sum with Verifier 0, that adds up the sums of secret shares to derive the sum.

\begin{algorithm}\DontPrintSemicolon
  \thisfloatpagestyle{empty}
  \caption{Client Protocol for Robust Secure Aggregation}
\label{alg:sec_agg}
  \SetKwData{Accepti}{Accept$_j$}
  \KwIn{Prover $j$ has a vector $\vx_j \in \Re^d$}
  \KwOut{Verifiers compute $\sum_j \vx_j$}

  \BlankLine
  \Fn{Prover$_j$($\vx_j$)}{
  \KwIn{Vector $\vx_j \in \Re^d$ with $\|\vx_j\| \leq 1$.}
  \KwParam{$\sigma_{SS} \in \Re$.}
  Generate $\vg_1, \ldots, \vg_{S-1} \sim \normal(\mathbf{0}, \sigma_{SS}^2 \mathbb{I}_d)$ using private randomness.\\
  Send $\vx_j- \sum_{i=1}^{S-1}\vg_i$ to Verifier 0.\\
  \For{$i=1\ldots S-1$}{
  Send $(j, \vg_{i})$ to Verifier $i$.
  }
  }
\end{algorithm}
\begin{algorithm}
  \caption{Server Protocol for Robust Secure Aggregation}

  \SetKwData{Accepti}{Accept$_j$}
  \KwIn{Prover $j$ has a vector $\vx_j \in \Re^d$}
  \KwOut{Verifiers compute $\sum_j \vx_j$}
\setcounter{AlgoLine}{0}
\BlankLine
  \Fn{Verifier-0}{
  \KwParam{Integer $k$. Threshold $\tau \in \Re$.}
  Receive $V_0 = \{(j, \vz_0^j)\}$ from Provers. Let $J_0 = \{j : (j, \vz_0^j) \in V_0\}$.\\
  Generate $\vW \in \Re^{k\times d}$ with each $W_{ij} \sim \normal(0, \frac{1}{k})$ using private randomness.\\
  Send $\vW$ to Verifiers $1, \ldots, S-1$.\\
  \For{$i=1,\ldots, S-1$} {
  Receive $V_i = \{(j, \vy_i^j)\}$ from Verifier $i$. Let $J_i = \{j : (j, \vy_i^j) \in V_i\}$.
  }
  Let $J = \cap_{i} J_i$.\\
  \For{$j \in J$}{
  Compute $\vv^j = \vW \vz_0^j + \sum_{i=1}^{S-1} \vy_i^j + \normal(\mathbf{0}, \sigma_v^2 \mathbb{I}_k)$.\\ 
  \If{$|\vv^j| \leq \tau$}{add $j$ to $J^*$\tcp*{$J^*$ collects $j$ that pass the norm verification.}}
  }
  Send $J^*$ to Verifiers $1,\ldots, S-1$.\\
  {\em Optional:} \If{not $\textsf{Valid}(J^*)$}{Abort\tcp*{Ensure $J^*$ is large enough.}}
  $\vs_0 = \mathbf{0}$.\\
  \For{$j \in J^*$}{
   $\vs_0 = \vs_0 + \vz_0^j$.\\
  }
  \For{$i=1,\ldots, S-1$} {
  Receive $\vs_i$ from Verifier $i$.\\
  }
  Return $\sum_{i=0}^{S-1} \vs_i$.
}

\setcounter{AlgoLine}{0}
\vspace{4pt}
\hrule
\vspace{4pt}
  \Fn{Verifier-$i$ ($i \geq 1$)}{
  \KwParam{Integer $k$. Noise scale $\sigma_{v} \in \Re$.}
  Receive $V_i = \{j, \vz_i^j)\}$ from Provers. Let $J_i = \{j : (j, \vz_i^j) \in V_i\}$.\\ 
  Receive $\vW \in \Re^{k\times d}$ from Verifier-0.\\
  \For{$j \in J_i$}{
  Compute $\vy_i^j = \vW \vz_i^j + \normal(\mathbf{0}, \sigma_v^2 \mathbb{I}_k)$.\\
  }
  Send $\{(j,\vy_i^j)\}$ to Verifier-0.\\
  Receive $J^*$ from Verifier-0.\\
  {\em Optional:} \If{not $\textsf{Valid}(J^*)$}{Abort\tcp*{Ensure $J^*$ is large enough.}}
  $\vs_i = \mathbf{0}$.\\
  \For{$j \in J^*$}{
   $\vs_i = \vs_i + \vz_i^j$.\\
  }
  Send $\vs_i$ to Verifier-0.\\
}

\end{algorithm}

The privacy proof is nearly identical to the last section. Indeed up to the computation of $J^*$, the protocol is exactly equivalent to the norm verification protocol. Verifiers other than verifier 0 do not receive any additional message after $J^*$, so that a simulator for a subset of verifiers excluding verifier 0 is essentially identical to that in the previous section. Verifier $0$ receives a set of vectors $\{s_i\}$. For $i \neq T$, the simulator simulates $s_i \sim \normal(\mathbf{0}, |J^*|\sigma_{SS}^2\mathbb{I}_d)$ subject to the sum of all $s_i$'s being equal to the output. It can be easily verified that this part of the simulation is exact. Privacy follows.

We next prove the correctness. We wish to prove that when all the parties are honest, then the sum is correctly computed except with a small failure probability. With probability $1-n\beta$, each of the $n$ norm verification steps succeed, so that $J^*$ is the set of all clients. Conditioned on this, the correctness of the secret sharing and the commutativity of addition immediately imply that the sum computed by Verifier 0 is the desired sum of all vectors.

We note that for many applications such as gradient accumulation, a weaker correctness notion may suffice. If $J^*$ is a random subset of $[n]$ with each $j$ landing in $J^*$ with probability $(1-\beta)$, we get an unbiased estimate of the sum. For this weaker definition of correctness, the failure probability does not need to be scaled by a multiplicative factor of $n$ which translates to a smaller threshold $\tau$, and thus better robustness.

Finally we argue robustness. Consider a client $j$. If the client secret-shares a vector with norm at most $\rho$, then their affect on the computed sum is clearly at most $\rho$. On the other hand, if client $j$'s shares add up to a vector of norm larger than $\rho$, it will be rejected by the norm verification step except with probability $\beta$. This means that $j \not\in J^*$ and $j$'s secret shares do not contribute at all to the compute sum. Additionally, if $j$ does not send messages to all the verifiers, their input gets rejected as well.

When the validity check on $J^*$ is added, the robustness claim is weaker. Indeed suppose that the validity check compares $|J^*|$ to a threshold, say $\frac n 2$. Then the $(\frac{n}{2}+1)$th malicious client can cause the computation to abort. The robustness guarantee now says that if the computation succeeds, then the effect of any potentially malicious client is bounded. Further, we can argue that a small number of malicious clients cannot cause the computation to abort, except with small probability.

We have thus established correctness, robustness and privacy of our protocol. For $n$ clients sending vectors in $\Re^d$, the communication cost for each client is $O(d|S|)$. The communication cost between servers is $O(dk+nk+d|S|)$. Recall that a $k= O(\sqrt{\ln n})$ suffices to get polynomially small completeness and soundness.

\subsection{On the Privacy of the Sum}
We established the privacy of the protocol, conditioned on the sum. How do we ensure the privacy of the sum itself? One option is to add differential privacy noise to the sum itself to ensure privacy. If each verifier adds noise to $s_i$, we get a differential privacy guarantee against any strict subset of the verifiers. The eventual noise variance for the sum then scales with the number of servers.

An appealing alternative is to distribute the noise generation itself. This approach goes back to~\citet{ODOpaper}. The question of generating noise on different clients such that the sum has a certain distribution has been studied for this reason. While Gaussians noise has the nice property that sum of gaussians is a gaussian, Laplace noise is also ``divisible''~\citep{Goryczka2017, Balle2020}. These arguments however require that the summation be done over real numbers. In particular, this means that for privacy to hold, the constituents of the sum may need to be communicated to sufficiently high precision even if the original vectors are $\{0,1\}$. Works such as~\citep{Agarwal18} address this question of preserving privacy while reducing the communication.

Recent results on privacy amplification by shuffling offer an elegant way out of this cononudrum. The general results in this direction~\citep{ErlingssonFMRTT19,BalleBGN19a} say that local randomizers, when shuffled give strong central differential privacy guarantees. In particular, since summation is a post-processing of shuffling, these results apply to the sum.  The privacy-accuracy trade-offs of the shuffle model are very competitive with the central model for many settings~\citep{ErlingssonFMRSTT20,FeldmanMT21}.

This ability to post-process without hurting privacy offers additional benefits. The secret-shares themselves can be rounded, truncated, or compressed without hurting privacy. For example, when the input vectors are $\{0, 1\}$, the secret sharing algorithm can use discrete gaussian noise~\citep{canonne2020discrete}, and truncate all secret shares to $[-B, B]$ for a suitable constant $B$. This does not affect the privacy claim, and the truncation operator is the identity except with a small probability depending on $B$. The small loss in accuracy due to rare truncation can be analytically or empirically traded-off against the communication cost. As an example $B=127$ would suffice for encoding each bit as 8 bits, and would ensure that the likelihood of any single bit being distorted, say for $\sigma_{SS} = 20$ is at most $10^{-8}$. This may be an acceptable error rate in applications where randomized response is used to generate the bit vectors. In comparison the field size in PRIO must grow with the number of clients and for typical values, one would use at least 32 bits.

\bibliographystyle{plainnat}
\bibliography{dzk-refs}

\end{document}